%
%

\documentclass[10pt,journal]{IEEEtran}

%

%
\ifCLASSOPTIONcompsoc
   \usepackage[nocompress]{cite}
\else
\fi
%

%
\ifCLASSINFOpdf
  \usepackage{graphicx,epstopdf}
   \graphicspath{{./images/}}
\else
\fi
%
%

%
\usepackage[cmex10]{amsmath}
\usepackage{amssymb,bm,cite,tikz,commath}
\usetikzlibrary{patterns,calc}
%
\interdisplaylinepenalty=2500
\hyphenation{op-tical net-works semi-conduc-tor con-nec-tiv-ity}

\newtheorem{theorem}{Theorem}
\newtheorem{lemma}{Lemma}

\DeclareMathOperator{\dgr}{deg}
\DeclareMathOperator{\tr}{tr}
\DeclareMathOperator{\Prob}{P}
\DeclareMathOperator{\E}{E}

\DeclareMathOperator{\Cov}{Cov}
\newcommand{\T}{^{\mathsf{T}}}
\DeclareMathOperator*{\argmin}{arg\,min}
\DeclareMathOperator*{\argmax}{arg\,max}

\begin{document}


%
\title{Accuracy of Range-Based Cooperative Localization in Wireless Sensor Networks:\\[2mm]\LARGE A Lower Bound Analysis}
%
%
%
%

\author{Liang~Heng,~\IEEEmembership{Member,~IEEE}
        and~Grace~Xingxin~Gao,~\IEEEmembership{Member,~IEEE}%
\thanks{The authors are with the Department of Aerospace Engineering and the Coordinated Science Laboratory, University of Illinois at Urbana-Champaign, Urbana, IL 61801, USA. E-mail: heng@illinois.edu, gracegao@illinois.edu.}}

\IEEEcompsoctitleabstractindextext{%
\begin{abstract}
Accurate location information is essential for many wireless sensor network (WSN) applications. A location-aware WSN generally includes two types of nodes: \emph{sensors} whose locations to be determined and \emph{anchors} whose locations are known a priori. For range-based localization, sensors' locations are deduced from anchor-to-sensor and sensor-to-sensor range measurements. Localization accuracy depends on the network parameters such as network connectivity and size. This paper provides a generalized theory that quantitatively characterizes such relation between network parameters and localization accuracy. We use the \emph{average degree} as a connectivity metric and use geometric \emph{dilution of precision} (DOP), equivalent to the Cram\'{e}r-Rao bound, to quantify localization accuracy. We prove a novel lower bound on expectation of average geometric DOP (LB-E-AGDOP) and derives a closed-form formula that relates LB-E-AGDOP to only three parameters: average anchor degree, average sensor degree, and number of sensor nodes. The formula shows that localization accuracy is approximately inversely proportional to the average degree, and a higher ratio of average anchor degree to average sensor degree yields better localization accuracy. Furthermore, the paper demonstrates a strong connection between LB-E-AGDOP and the best achievable accuracy. Finally, we validate the theory via numerical simulations with three different random graph models.
\end{abstract}

\begin{keywords}
Wireless sensor networks, range-based localization, cooperative localization, accuracy, network connectivity, dilution of precision (DOP), Cram\'{e}r-Rao bound, Laplacian matrix
\end{keywords}}

\maketitle

\IEEEdisplaynotcompsoctitleabstractindextext

%
\IEEEpeerreviewmaketitle

%
\ifCLASSOPTIONcompsoc
  \noindent\raisebox{2\baselineskip}[0pt][0pt]%
  {\parbox{\columnwidth}{\section{Introduction}\label{sec:introduction}%
  \global\everypar=\everypar}}%
  \vspace{-1\baselineskip}\vspace{-\parskip}\par
\else
  \section{Introduction}\label{sec:introduction}\par
\fi
%

%
%
%
%
\IEEEPARstart{W}{ireless} sensor networks (WSNs) hold considerable promise for large-scale, flexible, robust, cost-effective data collection and information processing in complex environments
\cite{Akyildiz2002393,Langendoen2003,Heidemann1683469}. Location awareness is a fundamental feature in many WSN applications because ``sensing data without knowing the sensor's location is meaningless'' \cite{Rabaey00}. Location information can also help a node interact with its neighbors and surroundings, improving networking operations such as geographic routing and topology control \cite{Bruck2009}.

To enable location awareness in WSNs, a wide variety of localization schemes have been explored over the past decade. According to the measurements used to estimate locations, these schemes can be generally classified as range-based \cite{MooreRobust04,Teymorian5291229}, angle-based \cite{Peng4068140,Bruck2009}, proximity-based \cite{ShangConnectivity03,Wang4663064}, and event-driven \cite{RomerLighthouse03,ZhongUncontrolledEvents2012}. Besides, the localization schemes can be categorized as either noncooperative or cooperative \cite{Patwari2005,Wymeersch09}. In a noncooperative scheme, the unknown-location nodes (hereinafter referred to as \emph{sensor nodes} or simply \emph{sensors}) make measurements with known-location references (hereinafter referred to as \emph{anchor nodes} or \emph{anchors}), without any communication between sensor nodes. In a cooperative scheme, in addition to anchor-to-sensor measurements, each sensor also makes measurements with neighboring sensors; the additional information gained from sensor-to-sensor measurements enhances localization accuracy, availability, and robustness. In the literature, cooperative localization has also been named as ``relative,'' ``GPS-free,'' ``multi-hop,'' or ``network'' localization.

This paper focuses on range-based cooperative localization schemes.
\begin{figure}[!tb]
  \centering\small
  \includegraphics[width=0.95\linewidth]{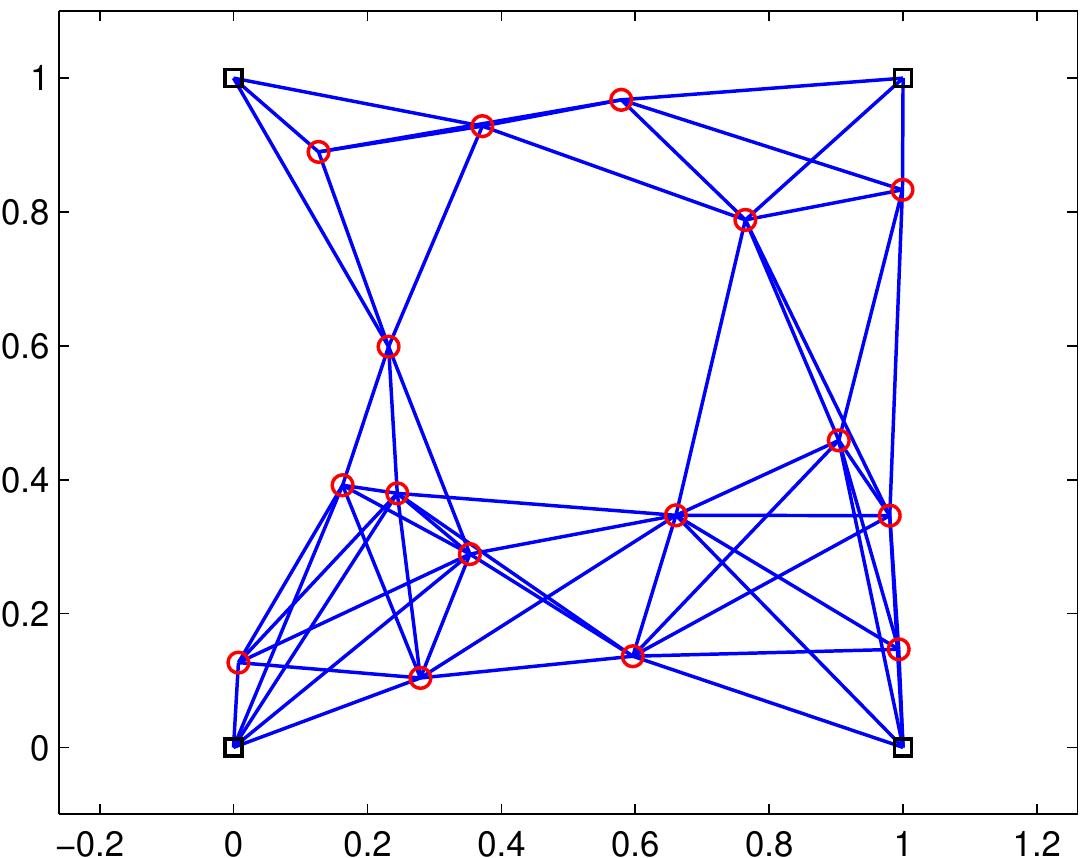}
  \caption{A scenario of range-based cooperative localization in 2 dimensions. Black squares denote anchor nodes whose locations are known, red circles denote sensor nodes whose locations are to be estimated, and blue lines represent ranging links which provide inter-node distance information. Sensors are randomly distributed and ranging links are randomly established according to a certain random graph model. This paper aims for a generalized theory that characterizes the connection between system parameters (namely network connectivity and network size) and localization accuracy.}\label{fig:snrnwk2d}
\end{figure}
As illustrated in Fig.~\ref{fig:snrnwk2d}, anchor nodes (denoted by black squares) are aware of their locations, and sensor nodes (denoted by red circles) determine their locations using inter-node distance information. Range-based cooperative localization is essentially a graph embedding problem \cite{Jackson20051,Aspnes06,Ding2010}. Connectivity of the graph exerts significant influence on many performance metrics, such as accuracy, energy efficiency, localizability, robustness, and scalability. Although localizability has been extensively studied with respect to connectivity \cite{Jackson20051,Aspnes06}, the relationship between accuracy and connectivity has not yet been theoretically treated. The objective of this paper is a generalized theory that quantitatively characterizes the connection between network parameters (namely, network connectivity and network size) and localization accuracy. The theory provides compendious guidelines on the design and deployment of location-aware WSNs.

\subsection{Related work}

%


As previously mentioned, range-based cooperative localization is a graph embedding problem. Saxe \cite{Saxe79} has shown that testing the embeddability of weighted graphs (equivalently, testing localizability) is strongly NP-hard. Aspnes et al.\cite{Aspnes04} have further proven that localization in sparse networks is NP-hard. However, when the network is densely connected such that $O(N^2)$ pairs of nodes know their relative distances, where $N$ is the number of nodes, there are efficient algorithms such as multidimensional scaling (MDS) \cite{Costa2006} and semidefinite programming (SDP) \cite{So2007} for solving the localization problem.

Cooperative localization can also be seen as a high-dimension optimization problem that finds a vector of node locations such that inter-node distances are as close to range measurements as possible. In general, this optimization problem may have many local optimums. The MDS and SDP algorithms \cite{Costa2006,So2007}, as well as some stochastic optimization algorithms \cite{kannan2005simulated}, are able to find a solution close to the global optimum under certain conditions (e.g., dense connectivity). The solution is not necessarily very accurate but can be treated as an initial guess. Then, the location solution can be improved using an iterative algorithm such as \emph{lateration} (also referred to as \emph{trilateration}, \emph{multilateration}, and even mistakenly \emph{triangulation}) \cite{Savvides2001DFL,MooreRobust04,Teymorian5291229,Misra06,Osa13}.

It should be noted that even when the localization problem is overdetermined, noisy range measurements can lead to flip ambiguity in a bad geometry, as discussed in \cite{Dulman08,Moravek12}. The flip ambiguity can cause incorrect initial guess or unconvergence of some iterative lateration algorithms. The localization accuracy discussed in this paper (as well as many previous papers) is about the lateration errors under the assumption that the initial guess is correct and the lateration converges.

%

The accuracy of lateration has been widely studied using the Cram\'{e}r-Rao (CR) bound \cite{Chang1381943,Larsson1268022,Yu2007,Shang1392106,Alsindi2008,Zhang20101391,Penna5581213,Wang2013}, which is the reciprocal of the Fisher information matrix \cite{Rao45}. Many of these studies ended up with a complicated Fisher information matrix (or an equivalent form) involving node locations, and did not give an explicit closed-form expression that can characterize localization accuracy with respect to network connectivity.

There have been some papers (e.g., \cite{Shang1392106,Savvides05}) using Monte-Carlo simulations to reveal that (1) for one sensor node and $m$ randomly-distributed anchor nodes, the CR bound is inversely proportional to $m$; (2) higher percentage of anchors results in better accuracy. However, there is still a dearth of theory to describe these relationships precisely for a more generalized setting.

Two papers \cite{Shen10,Javanmard13} deserve special mentions here because they contain a similar flavor to this paper. Shen et al\@. \cite{Shen10} presented scaling laws of localization accuracy for randomly-deployed nodes. The scaling laws indicate that sensors and anchors ``contribute equally'' to localization accuracy. However, this statement is correct only for dense network (a larger number of nodes); in this paper, we shall show that anchors generally contribute more than sensors do, and the sensors and anchors tend to contribute equally as the number of sensors increases. In \cite{Javanmard13}, Javanmard and Montanari offered neat upper and lower bounds of localization accuracy for random geometric graphs. However, the bounds are only applicable to random geometric graphs, and require range errors to be uniformly bounded.

\subsection{Our contributions}

In this paper, we use the \emph{average degree} as a connectivity metric and use geometric \emph{dilution of precision} (DOP), equivalent to the CR bound, to quantify localization accuracy. We have proved a lower bound on the expectation of average geometric DOP (E-AGDOP) under the assumption that nodes are randomly distributed, and nodes are randomly connected such that the graph of network can reach a certain average degree. We have further derived a closed-form formula that relates the lower bound to only three parameters: average anchor degree, average sensor degree, and number of sensor nodes. The formula shows that (1) localization accuracy is approximately inversely proportional to the average degree, and (2) average anchor degree contributes more to localization accuracy than average sensor degree does. Our numerical examples and simulations have validated the formula and further shown that (1) the lower bound is strongly connected to the best achievable localization accuracy, and (2) the lower bound is applicable to many random graph models.

\subsection{Outline of the paper}

The remainder of this paper is organized as follows. Section~\ref{sec:ProblemFormulation} formulates the cooperative localization problem and introduces the assumptions, definitions, and notations used throughout this paper. Section~\ref{sec:locAccuracy} analyzes localization accuracy and connects it to DOP. Section~\ref{sec:LowerBound} derives a closed-form expression LB-E-AGDOP, which is a function of network connectivity and size. Section~\ref{sec:interpretation} shows the strong connection between LB-E-AGDOP and the best achievable accuracy. Numerical simulation results are presented in Section~\ref{sec:simResults} to validate the theory. Finally, Section~\ref{sec:conclusion} concludes the paper. Proofs of key theorems and equations are provided in Appendices~\ref{ap:1} and \ref{ap:2}.

\section{Preliminaries}\label{sec:ProblemFormulation}

\subsection{Problem formulation}

In this paper, a sensor network is modeled as a \emph{simple} graph\footnote{A simple graph, also known as a strict graph, is an unweighted, undirected graph containing no self-loops or multiple edges \cite{West2001introduction}.} $\mathcal{G}=(V,E)$, where $V=\{1, 2, \ldots, N\}$ is a set of $N$ nodes (or\ vertices), and $E=\{e_1,e_2,\ldots,e_K\}\subseteq V\times V$ is a set of $K$ links (or\  edges) that connect the nodes \cite{Aspnes06}.

All nodes are in a $d$-dimensional Euclidean space ($d\geq 1$), with the locations denoted by $p_n\in \mathbb{R}^d$, $n=1$, \ldots, $N$. The first $N_S$ nodes, labeled 1 through $N_S$, are \emph{sensor} nodes (or\  mobile nodes), whose locations are unknown; the rest $N_A=N-N_S$ nodes, labeled $N_S+1$ through $N$, are \emph{anchor} nodes (or\  beacon nodes). Anchors are aware of their exact locations through built-in GPS receivers or manual pre-programming during deployment.

An unordered pair $e_k=(i_k,j_k)\in E$ if and only if there exists a direct ranging link between nodes $i_k$ and $j_k$. The link provides inter-node distance information $\rho_k=r_k+\epsilon_k$, where $r_k=\|p_{i_k}-p_{j_k}\|$ is the actual Euclidean distance between nodes $i$ and $j$, and $\epsilon_k$ is the range measurement error.


The cooperative localization problem is to determine the locations of sensor nodes $p_n$, $n=1$, \ldots, $N_S$, given a fixed network graph $G$, known locations of anchors $p_n$, $n=N_S+1$, \ldots, $N$, and range measurements $\rho_k$, $k=1$, \ldots, $K$.

\subsection{Assumptions}\label{ssec:assumptions}


\subsubsection{Range measurement errors}

The range measurements $\rho_k$ can be obtained by a variety of methods, such as one-way time of arrival (ToA), two-way ToA, or received signal strength indication (RSSI) \cite{Farooq2012WSN}. One-way ToA usually result in biased range measurements due to unsynchronized clocks \cite{Misra06}, while two-way ToA and RSSI do not depend on clocks. In this paper, we assume zero clock biases in range measurements. Our assumption holds for the cases of two-way ToA, RSSI, and one-way ToA with perfect clock synchronization.

Range measurement errors in RSSI-based methods are usually treated as having a log-normal distribution \cite{Patwari2003}. For most ToA-based methods, line-of-sight range measurement errors can be modeled as zero-mean Gaussian random variables \cite{MooreRobust04,Misra06}. In this paper, we adopt the Gaussian assumption.

\subsubsection{Coordinate symmetry}\label{sssec:cs}

For any link $e_k=(i_k,j_k)\in E$, we assume that the direction vector, defined as
\begin{equation}
v_k=r_k^{-1}(p_{i_k}-p_{j_k})=[v_{k,1},\ldots,v_{k,d}]\T,
\end{equation}
satisfies the following condition:
\begin{equation}
  \E(v_{k,1}^2)=\E(v_{k,2}^2)=\cdots=\E(v_{k,d}^2).
\end{equation}
This assumption holds for all sensor-to-sensor links if all sensor nodes are uniformly distributed in a space that is symmetrical in all coordinates. This assumption holds for anchor-to-sensor links if anchors are uniformly distributed or anchors are fixed at certain special locations such as the scenario shown in Fig.~\ref{fig:snrnwk2d}.

The list below shows three well-studied models of random graphs.
\begin{itemize}
  \item Erd\H{o}s--R\'{e}nyi random graph (ERG) $\mathcal{G}(N,p)$: Nodes are connected randomly regardless of the distance. Each link is included in the graph with probability $p$ independent from every other link \cite{ER59}.
  \item Random geometric graph (RGG) $\mathcal{G}(N,r)$: Two nodes are connected if and only if the distance between them is at most a threshold $r$ \cite{Penrose03,DiazRGG}.
  \item Random proximity graph (RPG) $\mathcal{G}(N,k)$: Each node connects to its $k$ nearest neighbors. These graphs are also denoted $k$-NNG \cite{DiazRGG}.
\end{itemize}
With properly chosen locations of anchor nodes, all of them satisfies the coordinate symmetry condition. Therefore, the theory developed in this paper is applicable to, but not limited to, the above models.



\subsection{Metrics of connectivity}

For all nodes $n=1$, \ldots, $N$, we define the following \emph{degrees}:
\begin{itemize}
  \item Anchor degree: $\dgr_A(n)$, the number of anchor nodes incident to node $n$;
  \item Sensor degree: $\dgr_S(n)$, the number of sensor nodes incident to node $n$;
  \item Degree: $\dgr(n)=\dgr_A(n)+\dgr_S(n)$, the number of nodes incident to node $n$;
\end{itemize}
We assume that there are no anchor-to-anchor links, i.e., $\dgr_A(n)=0$ for $n=N_S+1$, \ldots, $N$, because anchor-to-anchor links are helpless when locations of anchors are perfectly known.

In graph theory, connectivity is usually described by vertex connectivity or edge connectivity: a graph is $\kappa$-vertex/edge-connected if it remains connected whenever fewer than $\kappa$ vertices/edges are removed \cite{DiestelGraphTheory}. Unfortunately, vertex/edge connectivity mainly reflects some ``minimum'' properties of connectivity, such as $\min_{n\in\{1,\ldots,N\}}\dgr(n)$ \cite{DiestelGraphTheory}, and does not distinguish between sensor and anchor nodes. This paper uses average degrees to characterize the overall connectivity of the network. Average degrees are defined as
\begin{equation}
  \delta_*=\frac1{N_S}\sum_{n=1}^{N_S}\dgr_*(n),
\end{equation}
where the subscript ${}_*$ can be blank, ${}_A$, or ${}_S$, for the average degree, average anchor degree, or average sensor degree, respectively. 

Let $K_S$ and $K_A$ denote the number of sensor-to-sensor and anchor-to-sensor links in the network, respectively. It is easy to verify the equalities $K=K_S+K_A$, $N_S\delta_S=2K_S$, $N_S\delta_A=K_A$, and $\delta=\delta_S+\delta_A$.


\subsection{List of notations}

\begin{flushleft}\setlength{\IEEEdlabelindent}{0pt}
\begin{description}[\setlabelwidth{$\mathcal{N}(\mu,\sigma^2)$}]
  \item[$d$] dimensionality
  \item[$\dgr(n)$] degree of node $n$
  \item[$\dgr_A(n)$] anchor degree of node $n$
  \item[$\dgr_S(n)$] sensor degree of node $n$
  \item[$\delta$] average degree
  \item[$\delta_A$] average anchor degree
  \item[$\delta_S$] average sensor degree
  \item[$E$] set of all links, $\{e_1, \ldots, e_K\}$
  \item[$\epsilon_k$] range error of link $e_k$
  \item[$\bm \varepsilon$] localization errors, $\bm\varepsilon=\bm p^{(\infty)}-\bm p$
  \item[$F$] inverse of DOP matrix, $F=G\T G$
  \item[$\mathcal{G}$] graph $(V,E)$
  \item[$G$] geometry matrix
  \item[$H$] DOP matrix, $H=(G\T G)^{-1}$
  \item[$I$] identity matrix
  \item[$i_k$] head of link $e_k=(i_k,j_k)$
  \item[$j_k$] tail of link $e_k=(i_k,j_k)$
  \item[$K$] number of all links
  \item[$K_A$] number of all anchor-to-sensor links
  \item[$K_S$] number of all sensor-to-sensor links
  \item[$L$] Laplacian matrix of graph $\mathcal{G}$, $d\check \Xi=[L_{ij}]_{i,j\in\{1,2,\ldots,N_S\}}$
  \item[$m$] index of dimensions, $m\in\{1,\ldots,d\}$.
  \item[$N$] number of nodes, $N=N_A+N_S$
  \item[$N_A$] number of anchor nodes
  \item[$N_S$] number of sensor nodes
  \item[$\mathcal{N}(\mu,\sigma^2)$] Gaussian distribution with mean $\mu$ and variance $\sigma^2$
  \item[$p_i$] location of node $i$
  \item[$r_k$] actual distance of link $e_k$
  \item[$\rho_k$] distance measurement of link $e_k$
  \item[$\Sigma$] covariance of range errors
  \item[$\sigma_k$] standard deviation of range errors of link $e_k$
  \item[$V$] set of all nodes, $\{1, \ldots, N\}$
  \item[$v_k$] unit vector denoting the direction of link $e_k$, $v_k=r_k^{-1}(p_{i_k}-p_{j_k})$
  \item[$\Xi$] conditional expectation of $F$ given certain links, $\Xi=\E_{\text{locations}}(F|\text{links})$
  \item[$\check \Xi$] submatrix of $\Xi$, representting one coordinate
\medskip
  \item[$\|z\|$] Euclidean norm of vector $z$
  \item[$A\succeq B$] $A-B$ is positive semidefinite
\end{description}
\end{flushleft}

\section{Localization Accuracy}\label{sec:locAccuracy}

Localization is essentially an optimization problem that finds coordinate vectors $p_n\in \mathbb{R}^d$, $n=1$, \ldots, $N_S$, such that for each ranging link $e_k=(i_k,j_k)\in E$, the distance $r_k=\|p_{i_k}-p_{j_k}\|$ is as close to the range measurement $\rho_k$ as possible. 

Assume that range errors follow a zero-mean Gaussian distribution:
\begin{equation}
  \epsilon_k=\rho_k-r_k\sim\mathcal{N}(0,\sigma_k^2),\quad\forall k=1,\ldots,K.
\end{equation}
The maximum-likelihood estimation of $\{p_n\}_{n=1}^{N_S}$ is equivalent to the weighted least squares (LS) problem
\begin{equation}\label{eq:WLS}
\begin{split}
  &\argmax_{\{p_n\}_{n=1}^{N_S}}\Prob\bigl(\{\rho_k\}_{k=1}^K\bigm|\{p_n\}_{n=1}^{N_S}\bigr)\\
  &=\argmax_{\{p_n\}_{n=1}^{N_S}}\prod_{k=1}^K\frac{1}{2\pi\sigma_k^2}\exp\Bigl(-\frac{(\|p_{i_k}-p_{j_k}\|-\rho_k)^2}{2\sigma_k^2}\Bigr)\\
  &=\argmin_{\{p_n\}_{n=1}^{N_S}}\sum_{k=1}^K\frac{(\|p_{i_k}-p_{j_k}\|-\rho_k)^2}{\sigma_k^2}.
\end{split}
\end{equation}

The LS problem cannot be directly solved because the distance $r_k=\|p_{i_k}-p_{j_k}\|$ is a nonlinear function of the coordinate vectors $p_{i_k}$ and $p_{j_k}$. Let $\bm r=(r_{1},r_{2},\ldots,r_{K})\T\in\mathbb{R}^K$ and $\bm p=\operatorname{column}\{p_{1},p_{2},\ldots,p_{N_S}\}\allowbreak\in\mathbb{R}^{dN_S}$. The first-order linear approximation of the distance function $\bm r(\bm p^{(0)}+\Delta\bm p)$ with respect to an initial guess $\bm p^{(0)}$ can be written as
\begin{equation}
  \bm r(\bm p^{(0)}+\Delta\bm p)=\bm r(\bm p^{(0)})+G\Delta\bm p,
\end{equation}
where the \emph{geometry matrix} $G\in\mathbb{R}^{K\times dN_S}$ is given by
\begin{equation}\label{eq:G}
\begin{split}
G&=\frac{\partial\bm r}{\partial\bm p}\\
&=\begin{bmatrix}
   \frac{\partial r_1}{\partial p_{1,1}} & \ldots & \frac{\partial r_1}{\partial p_{1,d}} & \ldots & \frac{\partial r_1}{\partial p_{N_S,1}} & \ldots & \frac{\partial r_1}{\partial p_{N_S,d}} \\
   \vdots & \vdots & \vdots & \vdots & \vdots & \vdots \\
   \frac{\partial r_K}{\partial p_{1,1}} & \ldots & \frac{\partial r_K}{\partial p_{1,d}} & \ldots & \frac{\partial r_K}{\partial p_{N_S,1}} & \ldots & \frac{\partial r_K}{\partial p_{N_S,d}} \\
 \end{bmatrix},
\end{split}
\end{equation}
where $p_{i,m}$, $m=1$, \ldots, $d$, is the $m$th element of the coordinate vector $p_i$. Each element of the geometry matrix $G$ is given by
\begin{equation}\label{eq:Gij}
\begin{split}
G_{k,(n-1)d+m}
&=\frac{\partial r_k}{\partial p_{n,m}}
=\frac{\partial \|p_{i_k}-p_{j_k}\|}{\partial p_{n,m}}\\
&\hspace{-3em}=
\begin{cases}
  \frac{p_{i_k,m}-p_{j_k,m}}{\|p_{i_k}-p_{j_k}\|}=v_{k,m} & \text{if }n=i_k,\\
  \frac{p_{j_k,m}-p_{i_k,m}}{\|p_{i_k}-p_{j_k}\|}=-v_{k,m} & \text{if }n=j_k,\\
  0 & \text{otherwise}.\\
\end{cases}
\end{split}
\end{equation}
Each row of $G$ represents a link. There are only $d$ nonzero elements in a row for an anchor-to-sensor link, and there are $2d$ nonzero elements for an sensor-to-sensor link. Given that each row of $G$ has $dN_S$ elements, $G$ is highly sparse when the network contains many sensor nodes.

When the network is localizable, $G$ must be a tall matrix (i.e., $K\geq dN_S$ \cite{Jackson20051,Aspnes06,HengIPSN13}) with full column rank. Then, the weighted LS problem \eqref{eq:WLS} can be solved by the following iterative algorithm based on the Newton--Raphson method \cite{Misra06}:
\begin{equation}\label{eq:iterative_algorithm}
  \bm p^{(n+1)}=\bm p^{(n)}+(G\T \Sigma^{-1}G)^{-1}G\T \Sigma^{-1}[\bm\rho-\bm r(\bm p^{(n)})],
\end{equation}
where $\bm \rho=(\rho_{1},\rho_{2},\ldots,\rho_{K})\T$, 
$\Sigma=\Cov(\bm\epsilon,\bm\epsilon)$ is the covariance of range errors, where $\bm\epsilon=(\epsilon_1,\ldots,\epsilon_K)\T$.

When the initial guess $\bm p^{(0)}$ is accurate enough and the iteration converges, the localization errors $\bm\varepsilon$ have the following relationship to the range errors $\bm\epsilon=(\epsilon_1,\ldots,\epsilon_K)\T$:
\begin{equation}
\begin{split}
  \bm\varepsilon&=\bm p^{(\infty)}-\bm p
  =(G\T \Sigma^{-1}G)^{-1}G\T \Sigma^{-1}(\bm\rho-\bm r)\\
  &=(G\T \Sigma^{-1}G)^{-1}G\T \Sigma^{-1}\bm\epsilon.
\end{split}
\end{equation}
The covariance of localization errors is thus given by
\begin{equation}
\begin{split}
  \Cov(\bm\varepsilon,\bm\varepsilon)
  &=(G\T \Sigma^{-1}G)^{-1}G\T \Sigma^{-1}\Cov(\bm\epsilon,\bm\epsilon)\\
  &\qquad\Sigma^{-1}G\T(G\T \Sigma^{-1}G)^{-1}\\
  &=(G\T \Sigma^{-1}G)^{-1}.
\end{split}
\end{equation}
This has achieved the CR bound \cite{Chang1381943,Shang1392106,Alsindi2008,Zhang20101391,Larsson1268022,Penna5581213}.

If range measurement errors are independent and identically distributed (iid), i.e., $\Sigma=\operatorname{diag}(\sigma^2,\ldots,\sigma^2)$, we have
\begin{equation}\label{eq:covLocErr}
  \Cov(\bm\varepsilon,\bm\varepsilon)=(G\T \Sigma^{-1}G)^{-1}=\sigma^2(G\T G)^{-1}.
\end{equation}
The matrix $H=(G\T G)^{-1}\in\mathbb{R}^{dN_S\times dN_S}$ is referred to as dilution of precision (DOP) matrix. DOP is a term widely used in satellite navigation specifying the multiplicative effect on positioning accuracy due to satellite geometry\footnote{The DOP is usually defined in the form of $\sqrt{\tr[(G\T G)^{-1}]}$ \cite{Savvides05,Misra06}. In this paper, we define DOP in the form of $\tr[(G\T G)^{-1}]$ for simplicity in calculation and analysis.} \cite{Misra06}. For cooperative localization, DOP specifies the multiplicative effect due to not only geometry of the nodes but also connectivity of the network. DOP decouples localization accuracy from range accuracy. The smaller DOP is, the better localization accuracy one would expect.

A diagonal element $H_{(n-1)d+m,(n-1)d+m}$ is the DOP of coordinate $m$ for node $n$. The sum of all the diagonal elements, $\tr(H)$, is the geometric DOP (GDOP) of the whole network. In this paper, we define average GDOP (AGDOP) as GDOP divided by the number of sensor nodes, $\tr(H)/N_S$. AGDOP is a performance indicator of localization accuracy due to network geometry and connectivity.

For a network where nodes are deployed and connected randomly, AGDOP is a random variable. The expectation of AGDOP (E-AGDOP) indicates the expected localization accuracy because the root-mean-square localization error is proportional to $\sqrt{\text{E-AGDOP}}$. We shall use E-AGDOP and its lower bound to study the relationship between localization accuracy and network connectivity in the rest of the paper.



\section{Lower Bound on E-AGDOP}\label{sec:LowerBound}

In this section, we shall prove that $[\E(G\T G)]^{-1}$ is a lower bound on $\E[(G\T G)^{-1}]$. Furthermore, we shall show that it is possible to evaluate this lower bound analytically for a random network (randomly-deployed nodes and randomly-established links) that achieves a certain level of connectivity.

%
\begin{theorem}[Lower bound on DOP matrix]\label{thm:H_tildeH}
For a random network with a non-singular geometry matrix $G$ defined in \eqref{eq:G},
\begin{equation}
\E[(G\T G)^{-1}]\succeq [\E (G\T G)]^{-1},
\end{equation}
where the operator $X\succeq Y$ denotes that $X-Y$ is positive semidefinite.
\end{theorem}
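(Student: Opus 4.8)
The plan is to reduce the matrix inequality to a single-matrix positive-semidefiniteness fact and then push it through the expectation using linearity, with the Schur complement doing the work in both directions. The conceptual underpinning is that matrix inversion is operator convex on the cone of positive definite matrices; the Schur complement gives a self-contained way to exploit this without invoking the operator Jensen inequality directly. The key elementary observation is that for any fixed positive definite matrix $M$, the block matrix
\[
\begin{bmatrix} M & I \\ I & M^{-1}\end{bmatrix}\succeq 0,
\]
since $M\succ0$ and its Schur complement $M^{-1}-I\T M^{-1}I=0$ is (trivially) positive semidefinite.

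First I would apply this observation with $M=G\T G$. The hypothesis that $G$ is non-singular (full column rank) guarantees $G\T G\succ0$ almost surely, so the substitution is legitimate and yields, for almost every realization of the random network,
\[
\begin{bmatrix} G\T G & I \\ I & (G\T G)^{-1}\end{bmatrix}\succeq 0.
\]
Next I would take expectations entrywise. Because $z\T\,\E(W)\,z=\E(z\T W z)\ge 0$ holds for every fixed vector $z$ whenever a matrix-valued random variable $W$ is almost-surely positive semidefinite, the expectation of a positive-semidefinite random matrix is again positive semidefinite. Applying this to the block matrix above and using linearity of expectation gives
\[
\begin{bmatrix} \E(G\T G) & I \\ I & \E[(G\T G)^{-1}]\end{bmatrix}\succeq 0.
\]

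Finally, I would run the Schur complement argument in reverse. Since $\E(G\T G)\succ0$, this block matrix is positive semidefinite if and only if its Schur complement is, that is,
\[
\E[(G\T G)^{-1}]-[\E(G\T G)]^{-1}\succeq 0,
\]
which is exactly the claim of the theorem.

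The main obstacle is foundational rather than algebraic: I must ensure that the relevant expectations exist, i.e. that $(G\T G)^{-1}$ is integrable and that $\E(G\T G)$ is genuinely positive definite (not merely positive semidefinite), so that both the substitution $M=G\T G$ and the reverse Schur complement step are valid. Under the non-singularity assumption on $G$ and mild nondegeneracy of the random deployment, $G\T G\succ0$ a.s.\ forces $\E(G\T G)\succ0$, so its inverse is well defined; integrability can be taken as part of the ``random network'' modelling assumptions. With these points secured, the Schur complement route keeps the proof short and entirely elementary.
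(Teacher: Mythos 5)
Your proof is correct. It is, however, not literally the paper's argument: the paper invokes a matrix Cauchy--Schwarz inequality for random matrices (Lemma~\ref{lem:matrixCauchy}), $\E(B\T B)\succeq\E(B\T A)[\E(A\T A)]^{-1}\E(A\T B)$, and applies it with the substitutions $A=G$ and $B=G(G\T G)^{-1}$, which collapses immediately to the claim. Your Schur-complement route is the same mechanism unrolled: with the paper's choice of $A$ and $B$, the block matrix
\begin{equation*}
\begin{bmatrix} A\T A & A\T B \\ B\T A & B\T B\end{bmatrix}
=\begin{bmatrix} G\T G & I \\ I & (G\T G)^{-1}\end{bmatrix}
\end{equation*}
is exactly the matrix you start from, and taking expectations followed by the reverse Schur complement is precisely how the cited Cauchy--Schwarz lemma is usually proved. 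What your version buys is self-containment and elementarity --- no external lemma, and the pointwise positive semidefiniteness is transparent (e.g.\ via $\norm{M^{1/2}x+M^{-1/2}y}^2\geq 0$). What the paper's version buys is brevity and a reusable general inequality. Your foundational caveats are well placed and apply equally to the paper's route: $\E(G\T G)\succ 0$ follows from $G\T G\succ 0$ a.s.\ together with boundedness of the entries of $G$ (they are direction cosines), and integrability of $(G\T G)^{-1}$ is a genuine modelling assumption --- indeed the paper's own Section~\ref{sec:simResults} notes that E-AGDOP can be infinite for poorly connected networks, in which case the bound holds vacuously.
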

\begin{proof}
Detailed in Appendix~\ref{ap:1}.
\end{proof}

The matrix $F=G\T G$ is a function of node locations and links, both of which have been assumed to be random. Let us calculate $\E F$ by the following two steps:
\begin{enumerate}
  \item $\Xi=\E_{\text{nodes}}(F|\text{links})$, conditional expectation of $F$ for randomly-deployed nodes given certain links;
  \item $\E F=\E_{\text{links}}(\Xi)$, expectation of $\Xi$ for randomly-established links.
\end{enumerate}

\subsection{Step 1: randomly-deployed nodes}

Recall \eqref{eq:Gij} which describes the elements in $G$. Note that when link $e_k$ connects to node $n$, i.e., $n\in\{i_k,j_k\}$,
\begin{equation}\label{eq:sum_dr_dp}
  \sum_{m=1}^d\Bigl(\frac{\partial r_k}{\partial p_{n,m}}\Bigr)^2=
  \frac{\sum_{m=1}^d(p_{i_k,m}-p_{j_k,m})^2}{\|p_{i_k}-p_{j_k}\|^2}=1.
\end{equation}
By the coordinate symmetry assumption (Section~\ref{sssec:cs}), we have
\begin{equation}
  \E\Bigl(\frac{\partial r_k}{\partial p_{n,1}}\Bigr)^2=
  \E\Bigl(\frac{\partial r_k}{\partial p_{n,2}}\Bigr)^2=\cdots=
  \E\Bigl(\frac{\partial r_k}{\partial p_{n,m}}\Bigr)^2.
\end{equation}
To satisfy \eqref{eq:sum_dr_dp}, we must have
\begin{equation}\label{eq:Gij2}
  \E\Bigl(\frac{\partial r_k}{\partial p_{n,m}}\Bigr)^2=\frac1d,\quad\forall m=1,\ldots,d.
\end{equation}


Therefore, the elements of matrix $F=\{F_{\tilde i\tilde j}\}\in\mathbb{R}^{dN_S\times dN_S}$ have the conditional expectation
\begin{equation}
\begin{split}
\Xi_{\tilde i\tilde j}&=\E_{\text{nodes}}(F_{\tilde i\tilde j}|\text{links})
=\E\sum_{k=1}^K\frac{\partial r_k}{\partial p_{i,m_1}}\frac{\partial r_k}{{\partial p_{j,m_2}}}\\
&=\begin{cases}
  \frac1d\dgr(i) & \text{if }i=j\text{ and }m_1=m_2,\\
  -\frac1d & \text{if }(i, j)\in E\text{ and }m_1=m_2,\\
  0 & \text{otherwise},\\
\end{cases}
\end{split}
\end{equation}
where $\tilde i=(i-1)d+m_1$, $\tilde j=(j-1)d+m_2$, $1\leq m_1,m_2\leq d$. For instance, let us consider a very simple sensor network shown in Fig.~\ref{fig:simpleSN}.
The matrix $\Xi$ for this network is given by
\begin{equation}\label{eq:Xi}
\Xi_{\text{Fig.~\ref{fig:simpleSN}}}=
\begin{bmatrix}
  {\color{red}\frac12} & 0 & {\color{red}-\frac12} & 0 & {\color{red}0} & 0 \\
  0 & {\color{blue}\frac12} & 0 & {\color{blue}-\frac12} & 0 & {\color{blue}0} \\
  {\color{red}-\frac12} & 0 & {\color{red}1} & 0 & {\color{red}-\frac12} & 0 \\
  0 & {\color{blue}-\frac12} & 0 & {\color{blue}1} & 0 & {\color{blue}-\frac12} \\
  {\color{red}0} & 0 & {\color{red}-\frac12} & 0 & {\color{red}1} & 0 \\
  0 & {\color{blue}0} & 0 & {\color{blue}-\frac12} & 0 & {\color{blue}1} \\
\end{bmatrix}.
\end{equation}
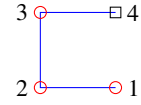
\begin{figure}[!tb]
  \centering\small
\begin{tikzpicture}[every rectangle node/.style={inner sep = 2pt, black}, every circle node/.style={inner sep = 1.6pt, red}, every path/.style={blue}]
\draw (1,0) node[circle,draw,label=right:1] {} -- (0,0) node[circle,draw,label=left:2] {} -- (0,1) node[circle,draw,label=left:3] {} -- (1,1) node[rectangle,draw,label=right:4] {};
\end{tikzpicture}
  \caption{A simple sensor network comprised of 4 nodes and 3 links in 2 dimensions. Nodes 1 to 3 are sensors; node 4 is an anchor ($N_S=3$, $N_A=1$, $K_S=2$, $K_A=1$). Eq.~\eqref{eq:Xi} shows the matrix $\Xi$ for this network.}\label{fig:simpleSN}
\end{figure}

As shown by the red- and blue-colored elements in \eqref{eq:Xi}, we have $\Xi=\check \Xi\otimes I$, where $\otimes$ denotes the Kro\-neck\-er product, and $I$ is the identity matrix of size $d$. The elements of the matrix $\check \Xi\in\mathbb{R}^{N_S\times N_S}$ are given by
\begin{equation}
\check \Xi_{ij}=
\begin{cases}
  \frac1d\dgr(i) & \text{if }i=j,\\
  -\frac1d & \text{if }(i,j)\in E,\\
  0 & \text{otherwise}.\\
\end{cases}
\end{equation}
For the sensor network shown by Fig.~\ref{fig:simpleSN}, the matrix $\check \Xi$ is given by
\begin{equation}
\check\Xi_{\text{Fig.~\ref{fig:simpleSN}}}=
\begin{bmatrix}
  \frac12 & -\frac12 & 0  \\
  -\frac12 & 1 & -\frac12 \\
  0  & -\frac12  & 1  \\
\end{bmatrix}.
\end{equation}

The matrix $\check\Xi$ (as well as $\Xi$) indicates a relationship between localization accuracy and graph Laplacians \cite{Anderson85}. Let $L$ denote the Laplacian matrix of the graph $\mathcal{G}$. It can be seen that $\check \Xi=d^{-1}[L_{ij}]_{i,j\in\{1,2,\ldots,N_S\}}$, i.e., $d\check \Xi$ is the matrix of $L$ obtained by deleting its last $N_A$ rows and columns that are related to the anchor nodes.

The lower bound on E-AGDOP (LB-E-AGDOP) can be calculated by inverting $F=\E \Xi$ or, equivalently, inverting $\check F=\E \check \Xi$, because $\tr[(\E \Xi)^{-1}]=d\tr[(\E\check \Xi)^{-1}]$.

\subsection{Step 2: randomly-established links}

Given an average degree $\delta$, the trace of $\check \Xi$ is given by
\begin{equation}
  \tr(\check \Xi)=\sum_{i=1}^{N_S}\check \Xi_{ii}=\sum_{i=1}^{N_S}\dgr(i)/d=N_S\delta/d.
\end{equation}
Given an average sensor degree $\delta_S$, there are $K_S=N_S\delta_S/2$ sensor-to-sensor links in the network, and thus $\check \Xi$ includes $N_S\delta_S$ off-diagonal elements with a non-zero value of $-1/d$. Assume that the sensor-to-sensor links are chosen uniformly at random from the set $\{(i,j)|1\leq i<j\leq N_S,i,j\in\mathbb{Z}\}$. Then, each off-diagonal element $\check \Xi_{ij}$, $i\neq j$ satisfies the Bernoulli distribution
\begin{equation}
  \check \Xi_{ij}=
  \begin{cases}
    -1/d & \text{with probability }\frac{\delta_S}{N_S-1},\\
    0 & \text{with probability }1-\frac{\delta_S}{N_S-1}.
  \end{cases}
\end{equation}
Then, the expectation of $\check F$ is given by
\begin{equation}\label{eq:EFij}
\E \check F_{ij}=\E_{\text{links}}(\Xi_{ij})=
\begin{cases}
  \frac{\delta}d & \text{if }i=j,\\
  -\frac{\delta_S}{d(N_S-1)} & \text{otherwise}.\\
\end{cases}
\end{equation}

Appendix~\ref{ap:2} shows that
\begin{equation}\label{eq:checkF}
\tr[(\E\check F)^{-1}]=\frac{N_S}{\eta}\Bigl(1+\frac{\zeta}{1-N_S\zeta}\Bigr),
\end{equation}
where $\eta=d^{-1}[\delta+\delta_S/(N_S-1)]$ and $\zeta=\delta_S/[\delta(N_S-1)+\delta_S]$. Therefore, LB-E-AGDOP is given by
\begin{equation}\label{eq:averageGDOP2}
\begin{split}
\text{LB-E-AGDOP}&=\frac{\tr[(\E F)^{-1}]}{N_S}
=\frac{d\tr[(\E\check F)^{-1}]}{N_S}\\
&=\frac{d}{\eta}\Bigl(1+\frac1{\zeta^{-1}-N_S}\Bigr)\\
&=\frac{d^2}{\delta+\delta_S/(N_S-1)}\Bigl(1+\frac{\delta_S}{\delta_A(N_S-1)}\Bigr)\\
&=\frac{d^2}{\delta}\frac{N_S-1+\delta_S/\delta_A}{N_S-1+\delta_S/\delta}.
\end{split}
\end{equation}



Thus far, we have obtained a closed-form expression for LB-E-AGDOP. It depends on two parameters of network connectivity, $\delta_S$ and $\delta_A$ (note $\delta=\delta_S+\delta_A$), and one parameter of network size, $N_S$. In \eqref{eq:averageGDOP2}, the first term ${d^2}/{\delta}$ shows that LB-E-AGDOP is approximately inversely proportional to the average degree, and grows quadratically with dimensionality. The second term $(N_S-1+\delta_S/\delta_A)\big/(N_S-1+\delta_S/\delta)$ shows that a higher ratio of average anchor degree to average sensor degree leads to better localization accuracy. However, this effect diminishes when number of sensor nodes increase. As $N_S\rightarrow\infty$, the LB-E-AGDOP approaches ${d^2}/{\delta}$ regardless of the ratio of average anchor degree to average sensor degree.


\section{LB-E-AGDOP and the Best Achievable Accuracy}\label{sec:interpretation}

The previous section has proven that the LB-E-AGDOP given by \eqref{eq:averageGDOP2} is a lower bound on E-AGDOP. In this section, we shall show that LB-E-AGDOP describes the best achievable accuracy with certain network connectivity. We first prove that LB-E-AGDOP is equal to the minimum AGDOP for one sensor node. For multiple sensor nodes, we use several numerical examples to show that LB-E-AGDOP is less than and very close to the minimum AGDOP.

\subsection{Minimum AGDOP for one sensor node}

Let us first consider the simplest case that there is only one sensor node, i.e., $N_S=1$. By \eqref{eq:averageGDOP2}, the lower bound becomes
\begin{equation}\label{eq:LB_E_AGDOP1}
\text{LB-E-AGDOP}=\frac{d^2}{\delta}\frac{\delta_S/\delta_A}{\delta_S/\delta}=\frac{d^2}{\delta_A}=\frac{d^2}{N_A}.
\end{equation}
In this subsection, we shall show that the lower bound ${d^2}/{N_A}$ represents the best achievable performance.

When $N_S=1$, the geometry matrix can be written as
\begin{equation}
G=
\begin{bmatrix}
  \cos\theta_1 & \sin\theta_1 \\
  \cos\theta_2 & \sin\theta_2 \\
  \vdots & \vdots \\
  \cos\theta_{N_A} & \sin\theta_{N_A} \\
\end{bmatrix},
\end{equation}
where $\theta_i$, $i=1$, \ldots, $N_A$ is the angular coordinate of anchor node $i$ in the polar coordinate system poled at the sensor node. Fig.~\ref{fig:1S3A} shows a scenario of $N_A=3$.

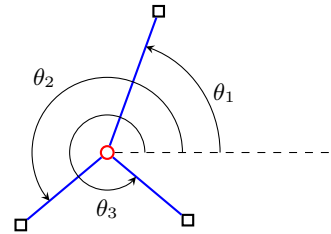
\begin{figure}[!b]
  \centering\small
\begin{tikzpicture}[scale=1,>=stealth,every rectangle node/.style={thick,black,inner sep=2},every circle node/.style={draw,thick,red,inner sep=1.7}]\small
\node [draw,rectangle] at (220:1.5) (A1) {};
\node [draw,rectangle] at (320:1.4) (A2) {};
\node [draw,rectangle] at (70:2) (A3) {};
\node [circle] at (0,0) (S1) {};
\draw [thick,blue] (A1) -- (S1);
\draw [thick,blue] (A2) -- (S1);
\draw [thick,blue] (A3) -- (S1);
\draw [dashed] (S1) -- (3,0);
\draw [->] (1.5,0) arc[start angle=0,delta angle=70,radius=1.5];
\draw [->] (1,0) arc[start angle=0,delta angle=220,radius=1];
\draw [->] (0.5,0) arc[start angle=0,delta angle=320,radius=0.5];
\node at (27:1.75) {$\theta_1$};
\node at (130:1.3) {$\theta_2$};
\node at (270:0.75) {$\theta_3$};
\end{tikzpicture}
\caption{Geometry of one sensor node and three anchor nodes.}\label{fig:1S3A}
\end{figure}

Noting that
\begin{equation}
G\T G=
\begin{bmatrix}
  \sum_{i=1}^{N_A}\cos^2\theta_i & \sum_{i=1}^{N_A}\cos\theta_i\sin\theta_i \\
  \sum_{i=1}^{N_A}\cos\theta_i\sin\theta_i & \sum_{i=1}^{N_A}\sin^2\theta_i \\
\end{bmatrix},
\end{equation}
we obtain a closed-form expression of $(G\T G)^{-1}$ as
\begin{equation}
\begin{split}
&(G\T G)^{-1}=\frac1{\det(G\T G)}\cdot\\
&\quad\begin{bmatrix}
  \sum_{i=1}^{N_A}\sin^2\theta_i & -\sum_{i=1}^{N_A}\cos\theta_i\sin\theta_i \\
  -\sum_{i=1}^{N_A}\cos\theta_i\sin\theta_i & \sum_{i=1}^{N_A}\cos^2\theta_i \\
\end{bmatrix},
\end{split}
\end{equation}
where the determinant of $G\T G$ is given by
\begin{equation}
\begin{split}
\det(G\T G)&=\biggl(\sum_{i=1}^{N_A}\cos^2\theta_i\biggr)\biggl(\sum_{i=1}^{N_A}\sin^2\theta_i\biggr)\\
&\qquad{}-\biggl(\sum_{i=1}^{N_A}\cos\theta_i\sin\theta_i\biggr)^2\\
&=\sum_{i=1}^{N_A}\sum_{j=1}^{N_A}(\cos^2\theta_i\sin^2\theta_j\\
&\qquad{}-\cos\theta_i\sin\theta_i\cos\theta_j\sin\theta_j)\\
&=\sum_{i=1}^{N_A}\sum_{j=1}^{N_A}\cos\theta_i\sin\theta_j\sin(\theta_j-\theta_i)\\
&=\sum_{1\leq i<j\leq N_A}\sin^2(\theta_j-\theta_i).
\end{split}
\end{equation}
Therefore, the GDOP is given by
\begin{equation}\label{eq:trGG0}
\tr[(G\T G)^{-1}]=\frac{N_A}{\sum_{1\leq i<j\leq N_A}\sin^2(\theta_j-\theta_i)},
\end{equation}
which agrees with the result obtained in \cite{Spirito01}.

From \eqref{eq:trGG0} we can find the minimum AGDOP for one sensor node and $N_A$ anchor nodes. Since the sum $S=\sum_{1\leq i<j\leq N_A}\sin^2(\theta_j-\theta_i)$ is continuous differentiable and bounded, a maximum or a minimum is attained if and only if
\begin{equation}
\begin{split}
\frac{\partial S}{\partial\theta_i}
&=\sum_{j=1}^{N_A}\sin[2(\theta_i-\theta_j)]\\
&=0,\qquad \forall i\in\{1,\ldots,N_A\}.
\end{split}
\end{equation}

It is easy to verify that the values $\theta_i=2\pi i/N_A$ satisfy the above condition. The corresponding maximum of $S$ is given by
\begin{equation}
\begin{split}
\sum_{1\leq i<j\leq N_A}\sin^2(\theta_j-\theta_i)
&=\sum_{1\leq i<j\leq N_A}\sin^2\frac{(j-i)2\pi}{N_A}\\
&=\frac12\sum_{i=1}^{N_A}\sum_{j=1}^{N_A}\sin^2\frac{(j-i)2\pi}{N_A}\\
&=\frac12\sum_{i=1}^{N_A}N_A/2=\frac{N_A^2}{4}.\\
\end{split}
\end{equation}
Therefore, the minimum AGDOP is $4/N_A$, equal to the lower bound given by \eqref{eq:LB_E_AGDOP1}.

Fig.~\ref{fig:DOP1} compares the LB-E-AGDOP calculated from \eqref{eq:averageGDOP2} and the AGDOP obtained from simulations with the parameters $N_S=1$, $N_A=3, 4, \ldots, 9$. The green solid curve shows the LB-E-AGDOP. The box-and-whisker plot \cite{Tukey77} shows the sample minimum, lower quartile, median, upper quartile, and sample maximum of the AGDOP. The red plus marks denote statistical outliers. It can be seen that our LB-E-AGDOP is equal to the sample minimum, and is closer to the median when $N_A$ is greater.

\begin{figure}[!h]
  \centering\small
  \includegraphics[width=1\linewidth]{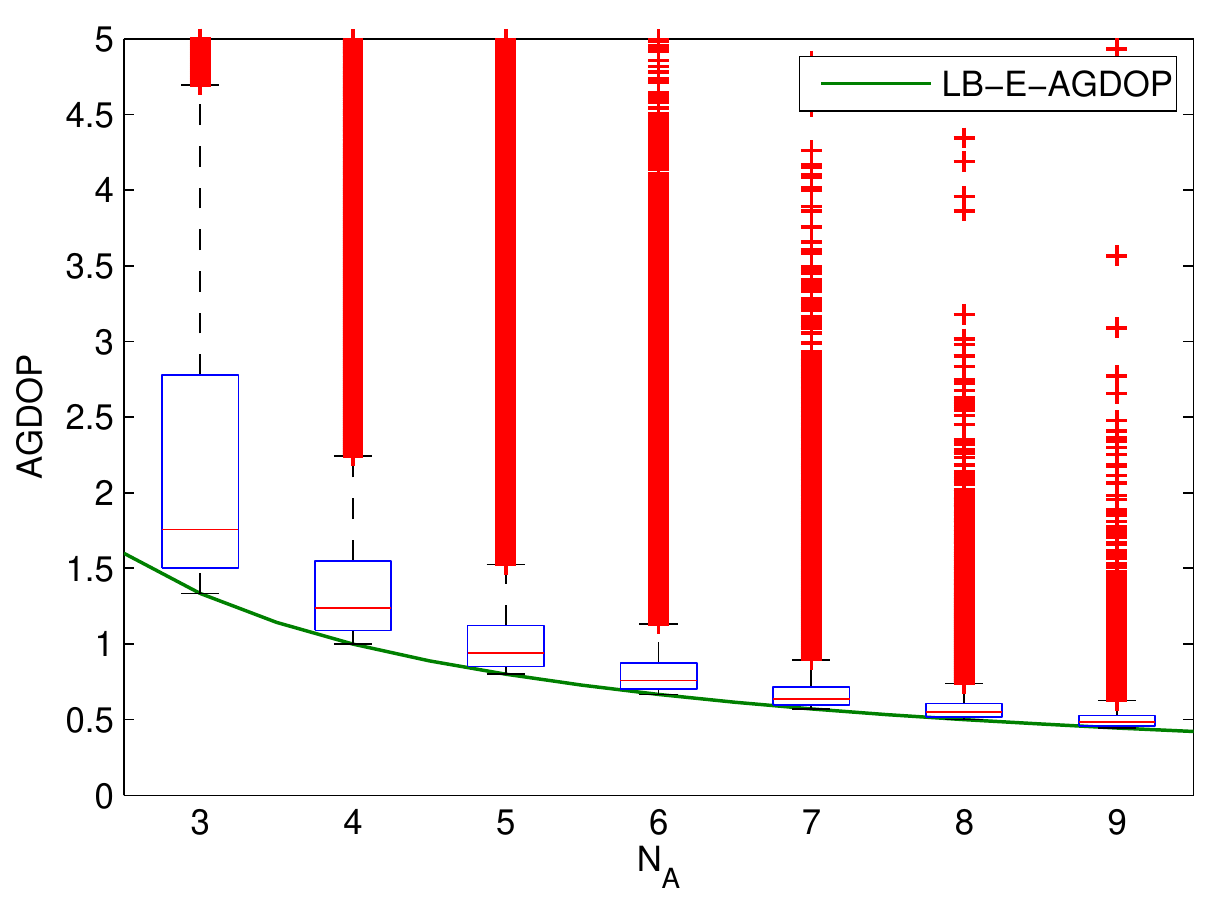}
  \caption{Comparison between LB-E-AGDOP from \eqref{eq:averageGDOP2} and AGDOP from simulations with the parameters $N_S=1$, $N_A=3, 4, \ldots, 9$. The green solid curve shows the LB-E-AGDOP. The box-and-whisker plot shows the sample minimum (lower whisker, black), lower quartile (lower edge of the box, blue), median (central mark, red), upper quartile (upper edge of the box, blue), and sample maximum (upper whisker, black; outliers excluded). Our LB-E-AGDOP is equal to the sample minimum, and is closer to the median when $N_A$ is greater.}\label{fig:DOP1}
\end{figure}

\subsection{Minimum AGDOP for multiple sensor nodes}

For multiple sensor nodes, the minimum AGDOP can be hardly derived from a theoretical analysis. Instead, we present several results obtained from numerical optimization. From the following results it can be seen that for multiple sensor nodes, our LB-E-AGDOP is less than the minimum achievable AGDOP. The gap between LB-E-AGDOP and the minimum AGDOP is smaller when the LB-E-AGDOP is smaller.

\subsubsection*{Case 1: $N_S=2$, $\delta_S=1$, $\delta_A=2$}
One of the best geometries is given by
\begin{center}
\begin{tikzpicture}[every rectangle node/.style={inner sep = 2pt, black}, every circle node/.style={inner sep = 1.6pt, red}, every path/.style={blue}]
\node[circle,draw] (S1) at (0,0) {};
\node[circle,draw] (S2) at (1,0) {};
\node[rectangle,draw] (A1) at (130.662:1) {};
\node[rectangle,draw] (A2) at (-130.662:1) {};
\node[rectangle,draw] (A3) at ($(1,0)+(49.338:1)$) {};
\node[rectangle,draw] (A4) at ($(1,0)+(-49.338:1)$) {};
\draw (A1) -- (S1) -- (A2);
\draw (A3) -- (S2) -- (A4);
\draw (S1) -- (S2);
\draw[<->,green!50!black] (130.662:0.5) arc[start angle=130.662, end angle=229.338, x radius=0.5, y radius=0.5];
\node[left] at (-0.5,0) {\small\color{green!50!black}98.68$^\circ$};
\end{tikzpicture}
\end{center}
The geometry has up-down and left-right reflection symmetry. The minimum AGDOP is 1.633, greater than the LB-E-AGDOP 1.500.

\subsubsection*{Case 2: $N_S=2$, $\delta_S=1$, $\delta_A=3$}
One of the best geometries is given by
\begin{center}
\begin{tikzpicture}[every rectangle node/.style={inner sep = 2pt, black}, every circle node/.style={inner sep = 1.6pt, red}, every path/.style={blue}]
\node[circle,draw] (S1) at (0,0) {};
\node[circle,draw] (S2) at (2,0) {};
\node[rectangle,draw] (A1) at (67.41:1) {};
\node[rectangle,draw] (A2) at (180:1) {};
\node[rectangle,draw] (A3) at (-67.41:1) {};

\node[rectangle,draw] (A4) at ($(2,0)+(112.59:1)$) {};
\node[rectangle,draw] (A5) at ($(2,0)+(0:1)$) {};
\node[rectangle,draw] (A6) at ($(2,0)+(-112.59:1)$) {};

\draw (A1) -- (S1) -- (A2);
\draw (A4) -- (S2) -- (A5);
\draw (A3) -- (S1) -- (S2) -- (A6);

\draw[<->,green!50!black] (67.41:0.5) arc[start angle=67.41, end angle=180, x radius=0.5, y radius=0.5];
\node[left] at (100:0.5) {\small\color{green!50!black}112.59$^\circ$};
\end{tikzpicture}
\end{center}
The geometry has up-down and left-right reflection symmetry. The minimum AGDOP is 1.124, slightly greater than the LB-E-AGDOP 1.067.

\subsubsection*{Case 3: $N_S=3$, $\delta_S=2$, $\delta_A=1$}
One of the best geometries is given by
\begin{center}
\begin{tikzpicture}[every rectangle node/.style={inner sep = 2pt, black}, every circle node/.style={inner sep = 1.6pt, red}, every path/.style={blue}]
\node[circle,draw] (S1) at (0,0) {};
\node[circle,draw] (S2) at (-60:1) {};
\node[circle,draw] (S3) at (-120:1) {};

\node[rectangle,draw] (A1) at (0:1) {};
\node[rectangle,draw] (A2) at ($(-60:1)+(-120:1)$) {};
\node[rectangle,draw] (A3) at (-1,0) {};

\draw (A1) -- (S1) -- (S2);
\draw (A2) -- (S2) -- (S3);
\draw (A3) -- (S3) -- (S1);

\draw[<->,green!50!black] (-60:0.5) arc[start angle=-60, end angle=0, x radius=0.5, y radius=0.5];
\node[right] at (-45:0.5) {\small\color{green!50!black}60.00$^\circ$};
\end{tikzpicture}
\end{center}
The geometry has $\pm 120^\circ$ rotational symmetry. The minimum AGDOP is 2.667, greater than the LB-E-AGDOP 2.000. It should be noted that the connectivity of this case does not ensure unique localizability \cite{Aspnes06,Teymorian5291229}.

\subsubsection*{Case 4: $N_S=3$, $\delta_S=2$, $\delta_A=2$}
One of the best geometries is given by
\begin{center}
\begin{tikzpicture}[every rectangle node/.style={inner sep = 2pt, black}, every circle node/.style={inner sep = 1.6pt, red}, every path/.style={blue}]
\node[circle,draw] (S1) at (0,0) {};
\node[circle,draw] (S2) at (-120:1) {};
\node[circle,draw] (S3) at (-60:1) {};

\node[rectangle,draw] (A1) at (37.925:1) {};
\node[rectangle,draw] (B1) at (142.075:1) {};
\node[rectangle,draw] (A2) at ($(157.925:1)+(-120:1)$) {};
\node[rectangle,draw] (B2) at ($(262.075:1)+(-120:1)$) {};
\node[rectangle,draw] (A3) at ($(-82.075:1)+(-60:1)$) {};
\node[rectangle,draw] (B3) at ($(22.075:1)+(-60:1)$) {};

\draw (A1) -- (S1) -- (S3) -- (B3);
\draw (A2) -- (S2) -- (S1) -- (B1);
\draw (A3) -- (S3) -- (S2) -- (B2);

\draw[<->,green!50!black] (37.925:0.5) arc[start angle=37.925, end angle=142.075, x radius=0.5, y radius=0.5];
\node[above] at (90:0.5) {\small\color{green!50!black}104.15$^\circ$};
\end{tikzpicture}
\end{center}
The geometry has left-right reflection symmetry and $\pm 120^\circ$ rotational symmetry. The minimum AGDOP is 1.313, slightly greater than the LB-E-AGDOP 1.200.
\begin{table}[!t]
  \renewcommand{\arraystretch}{1.5}
  \caption{LB-E-AGDOP establishes a lower bound on AGDOP}
  \label{tab:minimumAGDOP}
  \centering
\begin{tabular}{cccc|cc}
  \hline
  $N_S$ & $N_A$ & $\delta_S$ & $\delta_A$ & minimum AGDOP & LB-E-AGDOP \\ \hline
  1 & $n$ & $0$ & $n$ & $4/n$ & $4/n$ \\
  2 & 4 & 1 & 2 & 1.633 & 1.500 \\
  2 & 6 & 1 & 3 & 1.124 & 1.067 \\
  3 & 3 & 2 & 1 & 2.667 & 2.000 \\
  3 & 6 & 2 & 2 & 1.313 & 1.200 \\
  \hline
\end{tabular}
\end{table}

Table~\ref{tab:minimumAGDOP} summaries the minimum AGDOP and LB-E-AGDOP calculated for the cases mentioned in this section. Although the LB-E-AGDOP is derived from the expectation of AGDOP, we observe that LB-E-AGDOP may also establish a lower bound on AGDOP. 

\begin{figure*}[!t]
  \centering\small
  \includegraphics[width=1\linewidth]{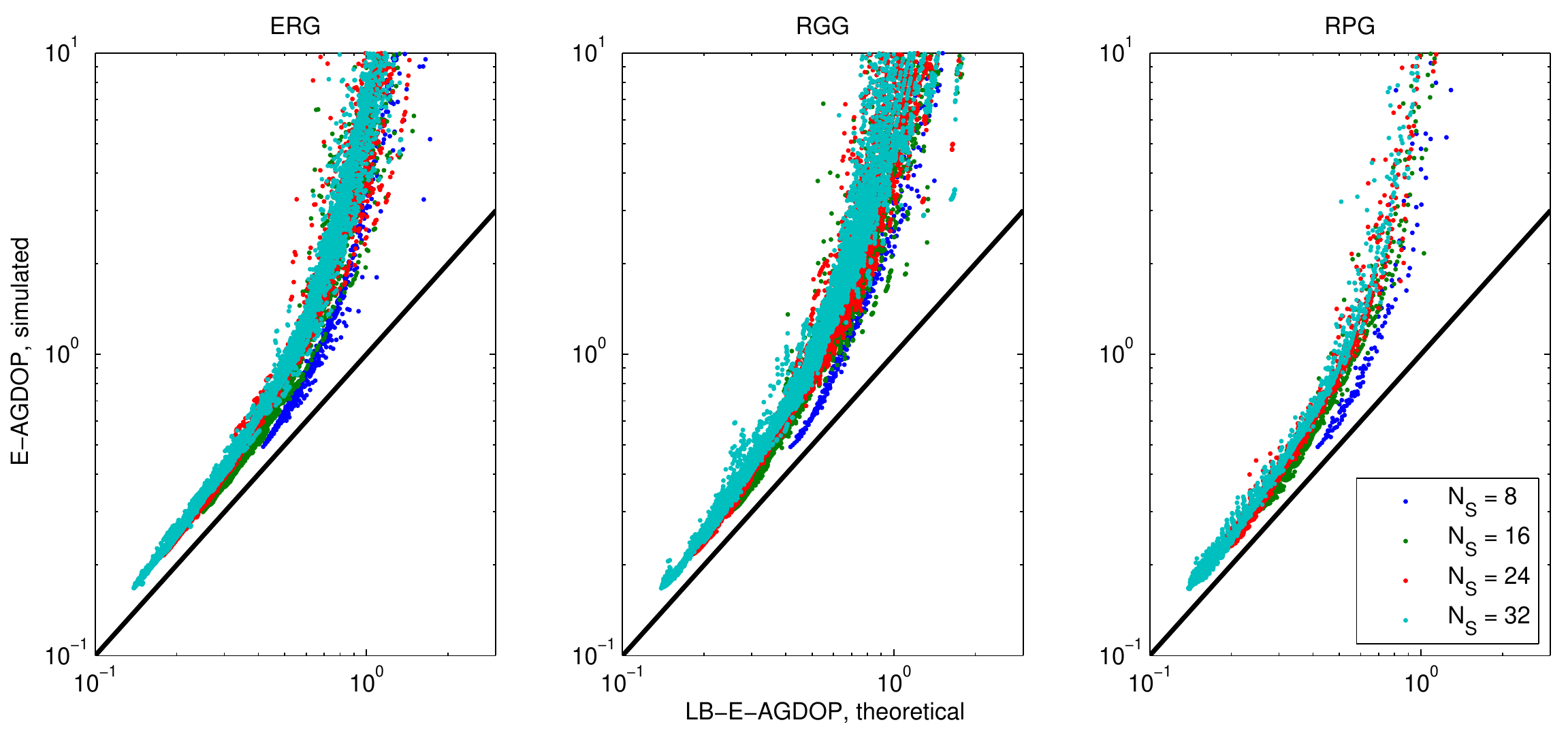}
  \caption{Comparison between the LB-E-AGDOP from \eqref{eq:averageGDOP2} and the \textbf{sample mean} of AGDOP from simulations with the parameters $N_S=8, 16, 24, 32$. The black line shows $y=x$.}\label{fig:DOP_DOP3}
\end{figure*}

\begin{figure*}[!t]
  \centering\small
  \includegraphics[width=1\linewidth]{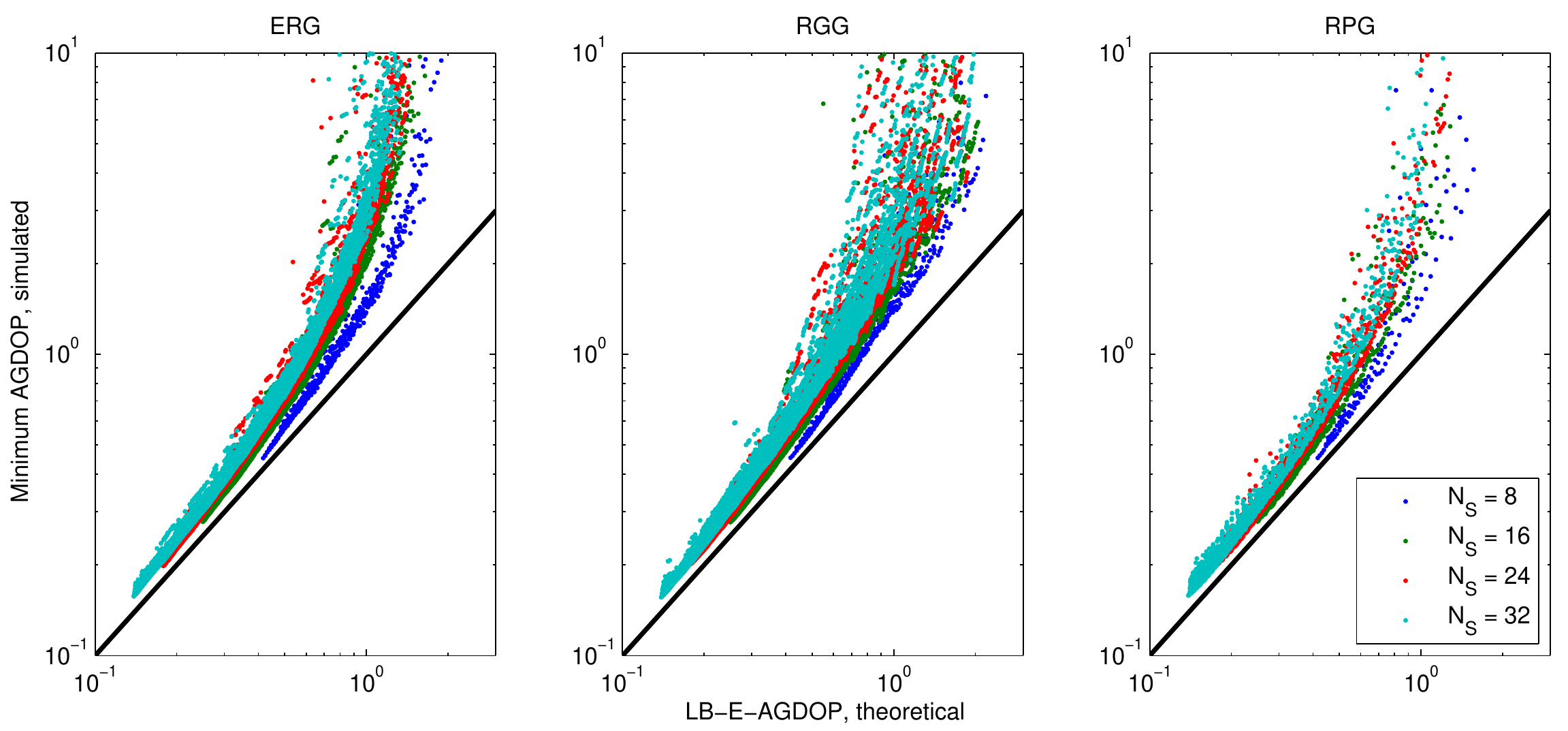}
  \caption{Comparison between LB-E-AGDOP from \eqref{eq:averageGDOP2} and the \textbf{sample minimum} of AGDOP from simulations with the parameters $N_S=8, 16, 24, 32$. The black line shows $y=x$.}\label{fig:DOP_DOP3min}
\end{figure*}

\section{Simulation Results}\label{sec:simResults}

In this section, we conduct numerical simulations to validate the theoretical results obtained in Sections~\ref{sec:LowerBound} and \ref{sec:interpretation}.

\subsection{Simulation settings}

All simulation results presented in this section are based on the following settings.
\begin{itemize}
  \item Two dimensions ($d=2$);
  \item Sensor nodes are uniformly distributed in the unit square $[0,1]\times[0,1]$;
  \item Four anchors ($N_A=4$) located at the corners of the unit square, i.e., $(0,0)$, $(0,1)$, $(1,0)$, and $(1,1)$;
  \item Random graph models: ERG, RGG, and RPG.
\end{itemize}
Fig.~\ref{fig:snrnwk2d} has shown a snapshot excerpted from the simulation of the RGG model with the parameters $r=0.495$ and $N_S=16$.


\subsection{Comparison between the LB-E-AGDOP and E-AGDOP}\label{ssec:simu_E_LB}


Fig.~\ref{fig:DOP_DOP3} compares the LB-E-AGDOP calculated using \eqref{eq:averageGDOP2} and the sample mean of AGDOP obtained from simulations. The simulations are based on the parameters $N_S=8, 16, 24, 32$ and the three random graph models. Each marker in Fig.~\ref{fig:DOP_DOP3} represents a network configuration with certain $N_S$, $\delta_S$ and $\delta_A$. 

Our theoretical lower bound is validated by the simulation results as no markers are below the black line $y=x$. Although the relationship between LB-E-AGDOP and E-AGDOP is nonlinear, if two different network configurations result in the same LB-E-AGDOP value, they also lead to very close E-AGDOP values. Therefore, our derived lower bound, LB-E-AGDOP, can be used as a performance indicator of the expected accuracy of range-based cooperative localization in random sensor networks.

Furthermore, it can be seen that our lower bound is validated for all the three random graph models, with resulting similar gaps between LB-E-AGDOP and E-AGDOP. This demonstrates that our lower bound is applicable to various random graph models, as long as the coordinate symmetry assumption holds.

Fig.~\ref{fig:DOP_DOP3} also shows that the lower bound is tighter when LB-E-AGDOP is smaller. When LB-E-AGDOP is greater than 1, E-AGDOP grows dramatically. In the last part of this section, we shall see that this is because E-AGDOP is likely to be infinite when $\delta<4$. Therefore, if E-AGDOP is finite, LB-E-AGDOP is usually less than 1, and the lower bound is considerably tight.

\subsection{Comparison between the LB-E-AGDOP and minimum AGDOP}\label{ssec:simu_M_LB}

Fig.~\ref{fig:DOP_DOP3min} compares the LB-E-AGDOP calculated using \eqref{eq:averageGDOP2} and the sample minimum of AGDOP obtained from simulations. The simulations are based on the parameters $N_S=8, 16, 24, 32$ and the three random graph models. Each marker in Fig.~\ref{fig:DOP_DOP3} represents a network configuration with certain $N_S$, $\delta_S$ and $\delta_A$. Comparing Fig.~\ref{fig:DOP_DOP3min} to Fig.~\ref{fig:DOP_DOP3}, we can see that the theoretical lower bound matches the minimum AGDOP better than matches the E-AGDOP.

Similar to our discussion about Fig.~\ref{fig:DOP_DOP3}, it can be seen that (1) the theoretical lower bound is validated by the simulation results; (2) the lower bound can be used as a performance indicator of the best accuracy of range-based cooperative localization in random sensor networks; (3) the lower bound works for all the three random graph models as well as other models that satisfy the conditions in Section~\ref{ssec:assumptions}; and (4) the lower bound is tighter when LB-E-AGDOP is smaller.

\subsection{Additional discussion}

When we compare LB-E-AGDOP to E-AGDOP, there is an implicit assumption E-AGDOP${}<\infty$. To many people's surprise, localizability, i.e., uniqueness of the solutions to the localization problem (as treated in \cite{Aspnes06,So2007,Teymorian5291229}), does not necessarily guarantee E-AGDOP${}<\infty$. More specifically, for one sensor node, three anchors almost surely achieve unique localizability. However, by \eqref{eq:trGG0} it can be verified that
$
\E\del[1]{\tr[(G\T G)^{-1}]}<\infty
$
if and only if $N_A\geq 4$. Our ongoing work \cite{HengUP13} has proven that for $d$ dimensions, at least $d+2$ anchors are required to locate one sensor node with finite accuracy.

In cooperative localization, if $\delta<d+2$, there must be one node that has $d+1$ neighbors or fewer. Thus, expectation of GDOP of this node is infinite, so is the E-AGDOP. When $N_S$ is large, LB-E-AGDOP${}\rightarrow d^2/\delta$. Therefore, E-AGDOP is likely to be infinite if
\begin{equation}
\text{LB-E-AGDOP}>d^2/(d+2).
\end{equation}
This explains why in Fig.~\ref{fig:DOP_DOP3}, E-AGDOP grows dramatically when LB-E-AGDOP is greater than 1. This also suggests that in practice, the network connectivity should meet the requirement $\text{LB-E-AGDOP}\leq d^2/(d+2)$ so that the nodes can be accurately located. Furthermore, if this requirement is met, from the simulation results we can see that E-AGDOP and minimum AGDOP are very close, and both of them can be approximated by our lower bound.


\section{Conclusion}\label{sec:conclusion}

This paper has presented a generalized theory that characterizes the connection between system parameters (network connectivity and size) and the accuracy of range-based localization schemes in random WSNs. We have proven a novel lower bound on expectation of AGDOP and derived a closed-form formula \eqref{eq:averageGDOP2} that relates LB-E-AGDOP and E-AGDOP to only three parameters: average sensor degree $\delta_S$, average anchor degree $\delta_A$, and number of sensor nodes $N$. The formula shows that LB-E-AGDOP is approximately inversely proportional to the average degree, and a higher ratio of average anchor degree to average sensor degree leads to better localization accuracy.

The simulation results have validated the theoretical results, and shown that (1) the lower bound are applicable to various random graph models that satisfy our coordinate symmetry assumption; (2) E-AGDOP and minimum AGDOP are very close, and both of them can be approximated by LB-E-AGDOP when LB-E-AGDOP is small. The theory and simulation results presented in this paper provide guidelines on the design of range-based localization schemes and the deployment of sensor networks.

\appendices
\section{Proof of Theorem~\ref{thm:H_tildeH}}\label{ap:1}

There are a few approaches to proving Theorem~\ref{thm:H_tildeH}. One of the simplest proofs is based on a recent result about the Cauchy--Schwarz inequality for the expectation of random matrices \cite{Tripathi19991,lavergne2008cauchy}:
\begin{lemma}[Cauchy--Schwarz inequality \cite{Tripathi19991,lavergne2008cauchy}]\label{lem:matrixCauchy}
Let $A\in\mathbb{R}^{n\times p}$ and $B\in\mathbb{R}^{n\times p}$ be random matrices such that $\E \|A\|^2<\infty$, $\E \|B\|^2<\infty$, and $\E(A\T A)$ is non-singular. Then
\begin{equation}
\E(B\T B)\succeq\E(B\T A)[\E (A\T A)]^{-1}\E(A\T B).
\end{equation}
\end{lemma}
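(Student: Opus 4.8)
The plan is to prove Lemma~\ref{lem:matrixCauchy} by a matrix version of \emph{completing the square}. For any deterministic matrix $C\in\mathbb{R}^{p\times p}$ the random matrix $(B-AC)\T(B-AC)$ is a Gram matrix, hence positive semidefinite for every realization; since taking expectations preserves the ordering $\succeq 0$, I obtain
\begin{equation*}
\E(B\T B)-\E(B\T A)\,C-C\T\E(A\T B)+C\T\E(A\T A)\,C\succeq 0.
\end{equation*}
The whole proof then amounts to choosing $C$ so that this expression collapses to the claimed inequality.

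Before manipulating these expectations I would confirm they are all finite, so the algebra is legitimate. The moment bounds $\E\|A\|^2<\infty$ and $\E\|B\|^2<\infty$, combined with the scalar Cauchy--Schwarz inequality applied entry by entry, ensure that $\E(A\T A)$, $\E(B\T B)$, $\E(A\T B)$, and $\E(B\T A)$ exist with finite entries. Non-singularity of $\E(A\T A)$ then permits the choice
\begin{equation*}
C=[\E(A\T A)]^{-1}\E(A\T B),
\end{equation*}
the matrix analogue of the regression coefficient of $B$ on $A$.

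Substituting this $C$ into the displayed inequality is the crux of the computation. Because $\E(A\T A)$ is symmetric, each of the three trailing terms $\E(B\T A)\,C$, $C\T\E(A\T B)$, and $C\T\E(A\T A)\,C$ equals the single expression $\E(B\T A)[\E(A\T A)]^{-1}\E(A\T B)$; the two subtracted copies and one added copy net to a single subtraction, leaving exactly
\begin{equation*}
\E(B\T B)\succeq\E(B\T A)[\E(A\T A)]^{-1}\E(A\T B).
\end{equation*}
I would also remark that both sides are symmetric---the right-hand side is its own transpose since $\E(A\T A)$ is symmetric and $\E(B\T A)=\E(A\T B)\T$---so the comparison under $\succeq$ is well posed.

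The argument is short, and the only genuinely creative step is the choice of $C$; without the regression matrix the residual expansion does not telescope, so I expect that (rather than any hard estimate) to be the conceptual obstacle, with entrywise integrability the one technical point to treat carefully. As an independent check I would mention the block-matrix route: concatenating $B$ and $A$ into a single $n\times 2p$ matrix $M$ gives $\E(M\T M)\succeq 0$, a $2p\times 2p$ positive-semidefinite block matrix whose Schur complement with respect to its $\E(A\T A)$ block is precisely the stated inequality. Finally, to recover Theorem~\ref{thm:H_tildeH} I would apply the lemma with $A=G$ and $B=G(G\T G)^{-1}$, so that $B\T B=(G\T G)^{-1}$ and $B\T A=A\T B=I$ deterministically; the lemma then reads $\E[(G\T G)^{-1}]\succeq[\E(G\T G)]^{-1}$, as required.
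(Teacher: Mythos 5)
Your proof is correct. Note, however, that the paper does not prove this lemma at all: it is stated as a known result and attributed to the cited references, with the appendix devoted only to its application (the substitution $A=G$, $B=G(G\T G)^{-1}$, which you also carry out correctly at the end). Your completing-the-square argument --- expanding $\E[(B-AC)\T(B-AC)]\succeq 0$ and choosing the regression matrix $C=[\E(A\T A)]^{-1}\E(A\T B)$ --- is the standard derivation found in those references, and your bookkeeping is right: the three trailing terms each reduce to $\E(B\T A)[\E(A\T A)]^{-1}\E(A\T B)$ by symmetry of $\E(A\T A)$ and the identity $\E(B\T A)=\E(A\T B)\T$, so the net is a single subtraction. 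The integrability remark (entrywise scalar Cauchy--Schwarz from $\E\|A\|^2<\infty$, $\E\|B\|^2<\infty$) and the Schur-complement cross-check are both sound. In short, you have supplied a self-contained proof of a step the paper delegates to its bibliography, which strengthens rather than merely reproduces the paper's argument.
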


With the substitutions $A=G$ and $B=G(G\T G)^{-1}$ into the above inequality, we have
\begin{equation}
U=\E[(G\T G)^{-1}]\succeq V=[\E (G\T G)]^{-1},
\end{equation}
which already proves Theorem~\ref{thm:H_tildeH}.

Since the diagonal elements of a positive semidefinite matrix must be non-negative,
we have
\begin{equation}\label{eq:XXX}
  U_{ii}\geq V_{ii},\quad\forall i=1,\ldots,dN_S,
\end{equation}
where $U=[U_{ij}]$ and $V=[V_{ij}]$.
In particular, the expectation of GDOP, $\tr(U)$, has a lower bound $\tr(V)$.

\section{Proof of Eq.~(\ref{eq:checkF})}\label{ap:2}

\begin{lemma}[Sherman--Morrison formula \cite{Hager1989}]\label{lem:SMformula}
Suppose $A$ is an invertible square matrix, and $u$ and $v$ are vectors. Suppose furthermore that $1 + v\T A^{-1}u \neq 0$. Then the Sherman--Morrison formula states that
\begin{equation}
    (A+uv\T)^{-1} = A^{-1} - \frac{A^{-1}uv\T A^{-1}}{1 + v\T A^{-1}u}.
\end{equation}
\end{lemma}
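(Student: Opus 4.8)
The plan is to prove the identity by direct verification: rather than deriving the formula from scratch, I would simply multiply the claimed inverse on the left by $A+uv\T$ and show that the product collapses to the identity matrix $I$. Because $A+uv\T$ is square, exhibiting a one-sided (right) inverse already certifies it as the genuine two-sided inverse, so checking a single product suffices.

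First I would set $\alpha=v\T A^{-1}u$, a scalar; the hypothesis $1+v\T A^{-1}u\neq 0$ then reads $1+\alpha\neq 0$, guaranteeing that the denominator in the claimed formula is nonzero and the right-hand side is well defined. Writing the candidate inverse as $B=A^{-1}-(1+\alpha)^{-1}A^{-1}uv\T A^{-1}$, I would expand
\begin{equation}
(A+uv\T)B=AA^{-1}+uv\T A^{-1}-\frac{AA^{-1}uv\T A^{-1}}{1+\alpha}-\frac{uv\T A^{-1}uv\T A^{-1}}{1+\alpha}.
\end{equation}
Using $AA^{-1}=I$ in the first and third terms reduces this to $I+uv\T A^{-1}-(1+\alpha)^{-1}uv\T A^{-1}-(1+\alpha)^{-1}uv\T A^{-1}uv\T A^{-1}$.

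The crucial observation, and the only point demanding care, is that the inner factor $v\T A^{-1}u$ appearing in the last term equals the scalar $\alpha$, so it may be pulled out of the matrix product: $uv\T A^{-1}uv\T A^{-1}=\alpha\,uv\T A^{-1}$. With this, the three rank-one terms combine as
\begin{equation}
uv\T A^{-1}\Bigl(1-\frac1{1+\alpha}-\frac{\alpha}{1+\alpha}\Bigr)=uv\T A^{-1}\Bigl(1-\frac{1+\alpha}{1+\alpha}\Bigr)=0,
\end{equation}
leaving $(A+uv\T)B=I$, exactly as claimed.

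The main obstacle is purely one of bookkeeping: one must carefully distinguish the scalar $v\T A^{-1}u$ from the rank-one matrix $uv\T A^{-1}$ and recognize that only the former commutes freely past the surrounding vectors. No structural assumption on $A$, $u$, or $v$ beyond invertibility of $A$ and the nonvanishing of $1+\alpha$ is required, and every remaining manipulation is elementary algebra.
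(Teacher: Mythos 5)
Your proof is correct. Note that the paper itself does not prove this lemma at all --- it imports the Sherman--Morrison formula by citation to Hager's survey --- so there is no in-paper argument to compare against; your direct verification is the canonical proof of the identity. You handle the two points that genuinely need care: you extract the scalar $\alpha=v\T A^{-1}u$ from the middle of the product $uv\T A^{-1}uv\T A^{-1}=\alpha\,uv\T A^{-1}$ rather than treating it as a matrix, and you correctly invoke the fact that for a square matrix a right inverse is automatically the two-sided inverse, so checking the single product $(A+uv\T)B=I$ suffices. The hypothesis $1+\alpha\neq 0$ enters exactly where it should, to make $B$ well defined and to let the three rank-one terms cancel via $1-\tfrac{1}{1+\alpha}-\tfrac{\alpha}{1+\alpha}=0$; this is also consistent with how the paper uses the lemma in Appendix B, where $u=-v$ and the condition becomes $1-u\T u=1-N_S\zeta\neq 0$.
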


With $\eta=d^{-1}[\delta+\delta_S/(N_S-1)]$, \eqref{eq:EFij} can be written as
\begin{equation}
\eta^{-1}\E\check F=I-uu\T,
\end{equation}
where $u=\sqrt{\zeta}(1,1,\ldots,1)\T$, and $\zeta=\delta_S/[\delta(N_S-1)+\delta_S]$.

Letting $u=-v=\sqrt{\zeta}(1,1,\ldots,1)\T$, by the Sherman--Morrison formula we have
\begin{equation}
(I-uu\T)^{-1}=I+uu\T/(1-u\T u),
\end{equation}
and thus
\begin{equation}
\begin{split}
\eta\tr\bigl[(\E \check F)^{-1}\bigr]&=\tr[(I-uu\T)^{-1}]\\
&=N_S+N_S\zeta/(1-N_S\zeta).
\end{split}
\end{equation}

%
%

\ifCLASSOPTIONcaptionsoff
  \newpage
\fi



\bibliographystyle{IEEEtran}
\bibliography{IEEEabrv,../../CoPos}

\begin{IEEEbiography}[{\includegraphics[width=1in,height=1.25in,clip,keepaspectratio]{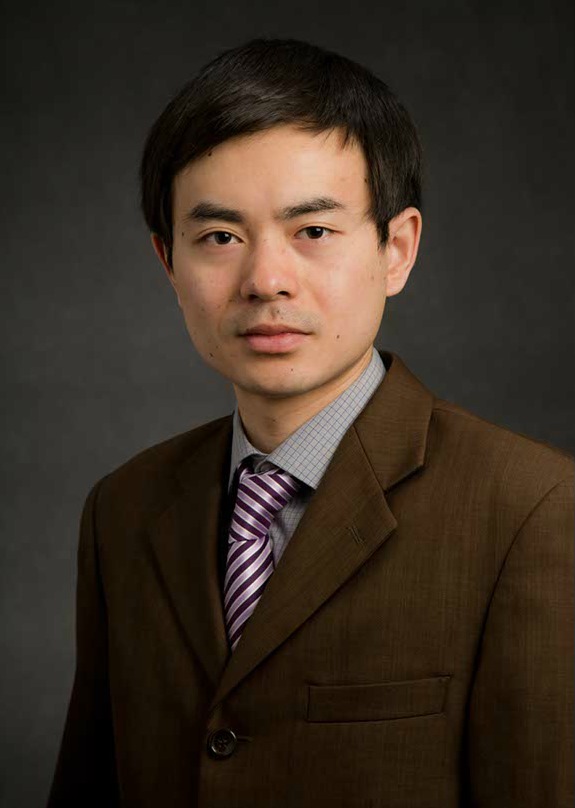}}]{Liang Heng} received the B.S. and M.S. degrees in electrical engineering from Tsinghua University, Beijing, China in 2006 and 2008. He received the PhD degree in electrical engineering from Stanford University under the direction of Per Enge in 2012. He is currently a postdoctoral research associate in the Department of Aerospace Engineering, University of Illinois at Urbana-Champaign. His research interests are cooperative navigation and satellite navigation. He is a member of the IEEE and the Institute of Navigation (ION).
\end{IEEEbiography}

\begin{IEEEbiography}[{\includegraphics[width=1in,height=1.25in,clip,keepaspectratio]{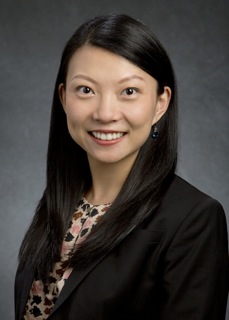}}]{Grace Xingxin Gao} received the B.S. degree in mechanical engineering and the M.S. degree in electrical engineering from Tsinghua University, Beijing, China in 2001 and 2003. She received the PhD degree in electrical engineering from Stanford University in 2008. From 2008 to 2012, she was a research associate at Stanford University. Since 2012, she has been with University of Illinois at Urbana-Champaign, where she is presently an assistant professor in the Aerospace Engineering Department. Her research interests are systems, signals, control, and robotics. She is a member of the IEEE and the Institute of Navigation (ION).
\end{IEEEbiography}


\vfill


\end{document}